\renewcommand{\@fnsymbol}[1]{\@alph{#1}}
\newcommand{\bbr}{\mathbb{R}}
\newcommand{\bbn}{\mathbb{N}}
\newcommand{\fn}[1]{\footnote{#1}}
\newcommand{\ci}{\cite}
\newcommand{\pcal}{\mathcal{P}}
\newtheorem{lemma}{Lemma}[section]
\newtheorem{theorem}[lemma]{Theorem}
\newtheorem{corollary}[lemma]{Corollary}
\newtheorem{definition}[lemma]{Definition}
\newtheorem{example1}[lemma]{Example}
\newtheorem{ex1}[lemma]{Example}
\newtheorem{rem1}[lemma]{Remark}
\newtheorem{assumption}[lemma]{Assumption}
\newtheorem{alg1}[lemma]{Algorithm}
\newtheorem{me1}[lemma]{Mechanism}
\newenvironment{rem}{\begin{rem1}\rm}{\end{rem1}}
\newenvironment{example}{\begin{example1}\rm}{\end{example1}}
\newenvironment{alg}{\begin{alg1}\rm}{\end{alg1}}
\numberwithin{equation}{section}
\numberwithin{figure}{section}
\newcommand{\T}{\mathsf{T}}
\DeclareMathOperator*{\argmax}{arg\,max}
\begin{document}

\title{Financial Contagion and Asset Liquidation Strategies}
\author{ Zachary Feinstein\fn{Zachary Feinstein,  ESE, Washington University, St. Louis, MO 63130, USA, {\tt zfeinstein@ese.wustl.edu}.}\\[0.7ex] \textit{Washington University in St. Louis}}
\date{\today~(Original: June 2, 2015)}
\maketitle

\begin{abstract}
This paper provides a framework for modeling the financial system with multiple illiquid assets during a crisis.  This work generalizes the paper by Amini, Filipovi\'{c} and Minca (2016) by allowing for differing liquidation strategies.  The main result is a proof of sufficient conditions for the existence of an equilibrium liquidation strategy with corresponding unique clearing payments and liquidation prices.  An algorithm for computing the maximal clearing payments and prices is provided.
\end{abstract}\vspace{0.2cm}
\textbf{Key words:} Systemic risk; financial contagion; fire sales; financial network. 

\section{Introduction}

Financial contagion occurs when the distress of one bank jeopardizes the health of other financial firms, and can ultimately spread to the real economy.  The spread of defaults in the financial system can occur due to both local connections, e.g., contractual obligations, and global connections, e.g., through the prices of assets due to mark-to-market valuation.  As evidenced by the 2007-2009 financial crisis, the cost of a systemic event is tremendous, thus requiring a detailed look at the contributing factors.  In this current paper, we will construct and analyze an extension of the financial contagion model of \cite{EN01} to include multiple illiquid assets with fire sales.

The baseline network model of \cite{EN01} considers an interbank network of nominal obligations.  That paper studies the propagation of defaults through the financial system due to unpaid liabilities.  Existence and uniqueness is proven in this base model, as well as algorithms to compute the clearing payments vector which captures the losses in the system.  This model has been extended in multiple avenues, including bankruptcy costs, cross-holdings and fire sales.  \cite{AW_15} studies these three extensions in a single model; we refer to that work and \cite{Staum} for a review of the prior literature.  Bankruptcy costs have been studied in, e.g., \cite{E07,RV13,EGJ14,GY14,AW_15,CCY16}.  Cross-holdings have been studied in, e.g., \cite{E07,EGJ14,AW_15}.  Fire sales for a single (representative) illiquid asset have been studied in, e.g., \cite{CFS05,NYYA07,GK10,AFM13,CLY14,AW_15,AFM16}.  For multiple illiquid assets, \cite{CW13,CW14} present a framework for modeling and estimating the volatility and correlations of asset prices during a fire sale.  In contrast to the present paper, in those publications the financial institutions do not exist within a financial network -- by considering the setting as such they are able to study multi-period and continuous-time models, which are not discussed in the scope of the current paper.  Similarly, \cite{CL15} considers a multiasset system in which financial contagion happens solely through balance sheet linkages without a network of interbank liabilities; that paper fixes a specific nonbanking demand to compare different asset allocation strategies.  The results of that paper on the robustness of a liquidity-based allocation would also be true in the current model, though the choice of liquidation strategy will result in a modified optimal allocation.  A mathematical analysis in this vein is beyond the scope of the current work.

Models of financial contagion and systemic risk have been studied empirically in, e.g., \cite{ELS06,U11,CMS10,GY14}.  These studies show that it is unlikely that financial contagion can be captured by the base model of contractual obligations.  Thus we extend the network model of \cite{EN01} to include multiple illiquid assets.  We study the case in which a fire sale is triggered if liquid capital (e.g., cash) is insufficient to cover the obligations of a firm, as was studied in, e.g., \cite{AFM16,AW_15}.  This is in comparison with the equilibrium model presented in~\cite{CFS05} with a single, representative, illiquid asset that is sold if a capital adequacy requirement is violated.\fn{The model presented herein is extended in \cite{FEM16} in the direction of \cite{CFS05,CL15} by explicitly stipulating leverage requirements.}  We first briefly extend the results from \cite{AFM16} for existence and uniqueness of the clearing payments and equilibrium prices under known liquidation strategies.  The main result is to prove existence of a joint clearing payments, asset prices, and an equilibrium liquidation strategy for each financial institution -- a game theoretic liquidation strategy -- and uniqueness of the clearing payments and prices under such a liquidation strategy.  We finish by providing, under the necessary conditions, a fictitious default algorithm for computing the maximal clearing payments and prices; we refer to, e.g., \cite{EN01,RV13,AFM16} for earlier discussions of this iterative algorithm.

\section{Setting} \label{Sec:setting}

Consider a financial system with $n$ financial institutions (e.g., banks, hedge funds, or pension plans) and a financial market with $m$ illiquid assets.  We denote by $p \in \bbr^n_+$ the realized payments of the banks, $q \in \bbr^m_+$ the prices of the illiquid assets.  There is an additional -- liquid -- asset in which all liabilities must be paid.  Throughout this paper we will use the notation $x \wedge y$ and $x \vee y$ for $x,y \in \bbr^d$ for some $d \in \bbn$ to denote
\begin{align*}
x \wedge y &= (\min(x_1,y_1),\min(x_2,y_2),...,\min(x_d,y_d))^\T, \quad\quad x \vee y = (\max(x_1,y_1),\max(x_2,y_2),...,\max(x_d,y_d))^\T.
\end{align*} 

As described in \cite{EN01}, any financial agent $i \in \{1,2,...,n\}$ may be a creditor or obligor to other agents.  Let $\bar{p}_{ij} \geq 0$ be the contractual obligation that firm $i$ owes to firm $j$.  Further, we assume that no firm has an obligation to itself, i.e., $\bar p_{ii} = 0$.  The \emph{total liabilities} of agent $i$ are given by
$\bar p_i := \sum_{j = 1}^n \bar p_{ij}$.
We can define the vector $\bar p \in \bbr^n_+$ as the vector of total obligations of each firm.
The \emph{relative liabilities} of firm $i$ to firm $j$, i.e., the fractional amount of total liabilities that firm $i$ owes to firm $j$, are given by $a_{ij} = \frac{\bar{p}_{ij}}{\bar{p}_i}$ if $\bar{p}_i > 0$ and $a_{ij} \in \bbr$ arbitrary if $\bar{p}_i = 0$.
We define the matrix $A = (a_{ij})_{i,j = 1,2,...,n}$ with the property $\sum_{j = 1}^n a_{ij} = 1$ for any $i$ with $\bar{p}_i > 0$.  In the case that $\bar p_i = 0$ we are able to choose $a_{ij}$ arbitrarily as it only appears as a multiplier of a variable identically equal to $0$.  Any financial firm may default on their obligations if sufficient liquid capital is not available.  We assume, as per \cite{EN01}, that in case of default the realized payments will be made in proportion to the size of the obligations, i.e., based on the relative liabilities matrix $A$.

Each firm $i = 1,2,...,n$ has an initial endowment of $x_i \geq 0$ in liquid assets and $s_i \in \bbr^m_+$ in illiquid assets.  That is, agent $i$ holds $s_{ik} \geq 0$ units of illiquid asset $k = 1,2,...,m$.  Thus the vector of liquid endowments is given by $x \in \bbr^n_+$ and the matrix of illiquid endowments is given by $S = (s_{ik})_{i = 1,2,...,n;\; k = 1,2,...,m} \in \bbr^{n \times m}_+$.  The price of the illiquid assets is given by a vector valued \emph{inverse demand function} $F: \bbr^m_+ \to [0,\bar q] \subseteq \bbr^m_+$ for maximum prices $\bar q_k$ for asset $k = 1,2,...,m$.\fn{\cite{CL15} utilizes a demand curve for the nonbanking sector rather than an inverse demand function, the results of this paper can be considered in that framework by constructing the equivalent inverse demand function from the nonbanking sector's demand.}  Note that we allow for liquidation of one asset to potentially influence the prices of the other assets as well during a fire sale.  This would allow us to include correlations of asset prices during fire sales as studied in \cite{CW13,CW14}.  The inverse demand function maps the quantity of each asset to be sold into a price per share. We will impose the following assumption for the remainder of this paper.
\begin{assumption}\label{Ass:idf}
The inverse demand function $F: \bbr^m_+ \to [0,\bar q]$ is continuous and nonincreasing.
\end{assumption}

We now present a comparable setting to that in \cite{CFS05,AFM16}.  We will assume that firms use mark-to-market accounting rules, so that the value of firm $i$'s liquid and illiquid endowment is given by
$x_i + q^\T s_i := x_i + \sum_{k = 1}^m s_{ik}q_k$
when the vector of prices is given by $q \in \bbr^m_+$. 
Additionally, each firm $i$ receives payments from other firms $j$ in proportion to the size of obligations, as described above.  That is, firm $j$ will make payment to firm $i$ in the amount of $p_{ji} = a_{ji} p_j$ if firm $j$ pays $p_j \geq 0$ into the system.  Thus, the wealth of firm $i$, taking into account the payments that firm $i$ must make, is given by
$x_i + \sum_{k = 1}^m s_{ik}q_k + \sum_{j = 1}^n a_{ji} p_j - p_i$.
By assuming limited liabilities of the firms, i.e., no firm will go into debt to pay its obligations, the wealth of any firm $i$ must be greater than or equal to $0$.  Thus by rearranging terms we deduce that the payments made by firm $i$ is bounded above by its mark-to-market valuation, i.e., 
$p_i \leq x_i + \sum_{k = 1}^m s_{ik} q_k + \sum_{j = 1}^n a_{ji} p_j$.
Assuming that firm $i$ must first pay all of its debts before reporting positive wealth, under pricing vector $q$,
\[p_i = \bar{p}_i \wedge \left(x_i + \sum_{k = 1}^m s_{ik} q_k + \sum_{j = 1}^n a_{ji} p_j\right).\]
That is, the amount that firm $i$ pays into the financial system is the minimum of its total liabilities $\bar{p}_i$ and the mark-to-market value of its assets is $x_i + \sum_{k = 1}^m s_{ik}q_k + \sum_{j = 1}^n a_{ji}p_j$.

However, it may not be possible for a firm $i$ to pay all obligations $\bar p_i$ with liquid holdings $x_i + \sum_{j = 1}^n a_{ji} p_j$.  This shortfall, $(\bar p_i - x_i - \sum_{j = 1}^n a_{ji} p_j)^+ := (\bar p_i - x_i - \sum_{j = 1}^n a_{ji} p_j) \vee 0$, must be made whole, if possible, through the liquidation of assets.  Implicitly we assume that a firm will only sell illiquid assets after it has exhausted its store of liquid capital.  Due to the price impact (modeled by the inverse demand function $F$), and the use of mark-to-market accounting, this is the strategy that an equity maximizer would employ.  This is in contrast to the work by \cite{CFS05} in which assets are liquidated in order to satisfy a capital adequacy requirement.  Unlike in the single illiquid asset case (cf.\ \cite{AW_15}), we cannot infer more properties without a discussion of the liquidation strategies employed by the financial firms.

\section{Clearing mechanism under known liquidation strategy} \label{Sec:clearing}

In this section we consider the realized payments that each firm is able to make under limited liabilities (i.e., no firm pays more than it owes $\bar p$) and the realized asset prices after fire sales given a strategy of how the assets are liquidated.  That is, we will define the \emph{liquidation function} $\gamma_{ik}: [0,\bar p] \times [0,\bar q] \to \bbr_+$ to be the number of units of asset $k = 1,2,...,m$ that firm $i = 1,2,...,n$ wishes to sell.  A financial agent will sell assets in order to cover obligations that it cannot meet through its liquid endowment (and realized payments from other firms) alone.  For notational simplicity we will say that 
\[\gamma_i(p,q) = (\gamma_{i1}(p,q),\gamma_{i2}(p,q),...,\gamma_{im}(p,q))^\T \in \bbr^m_+\] 
is the vector of units of illiquid assets which agent $i$ wishes to sell under payments $p \in \bbr^n_+$ and asset prices $q \in \bbr^m_+$.  Further denote by $\gamma(p,q) \in \bbr^{n \times m}_+$ to be the matrix of all asset liquidations under payments $p$ and prices $q$.

We will assume that short-selling is not allowed in the market.  Therefore the number of units of asset $k$ that firm $i$ wants to sell, for a fixed payment vector $p$ and price vector $q$, is given by $s_{ik} \wedge \gamma_{ik}(p,q)$.  However, if these sales were actualized, this leads to an updated price $q' \in \bbr^m_+$ given by the liquidations $s_{ik} \wedge \gamma_{ik}(p,q)$ due to price impact.  The updated price is thus given by the inverse demand function, i.e.,
\[q' = F\left(\sum_{i = 1}^n [s_i \wedge \gamma_i(p,q)]\right).\]
The goal is to find an equilibrium price vector so that the quoted prices take into account the realized liquidations and vice versa, i.e., $q' = q$.

Due to price impact of selling the illiquid assets, firms will generally want to liquidate the fewest assets necessary under payments $p \in \bbr^n_+$ and prices $q \in \bbr^m_+$.  As such, we will impose the following \emph{minimal liquidation} condition on the liquidation function $\gamma$.
\begin{assumption}\label{Ass:min-liquidation}
The liquidation function $\gamma: [0,\bar p] \times [0,\bar q] \to \bbr^{n \times m}_+$ satisfies the minimal liquidation condition: 
\begin{equation}\label{Eq:min-liquidation}
q^\T\left[s_i \wedge \gamma_i(p,q)\right] = (q^\T s_i) \wedge \left(\bar{p}_i - x_i - \sum_{j = 1}^n a_{ji}p_j\right)^+ \quad (\forall i = 1,2,...,m).
\end{equation}
\end{assumption}
Assumption~\ref{Ass:min-liquidation} implies the amount liquidated -- with payments $p$ at price $q$ -- is sufficient to cover either the shortfall in obligations or all assets are liquidated.  The minimal liquidation condition also implies that no firm liquidates more assets than is necessary to remain solvent.  We will now give a two examples of liquidation functions.

\begin{example}\label{Ex:1stock-gamma}
In the case where there is a single illiquid asset, i.e., $m = 1$, the constraint \eqref{Eq:min-liquidation} implies 
\[\gamma_i(p,q) = \frac{1}{q}\left(\bar p_i - x_i - \sum_{j = 1}^n a_{ji} p_j\right)^+.\]
This is the single asset model presented in, e.g., \cite{AFM16,AW_15}.
\end{example}

\begin{example}\label{Ex:proportional-gamma}
Firms might sell off their assets proportionally to what they hold, i.e., for some agent $i = 1,2,...,n$ and any asset $k = 1,2,...,m$
\[\gamma_{ik}(p,q) = \frac{s_{ik}}{\sum_{l = 1}^m s_{il}q_l}\left(\bar{p}_i - x_i - \sum_{j = 1}^n a_{ji} p_j\right)^+.\]
The number of units of the portfolio $s_i$ that need to be sold to make up the shortfall in liability payments is given by $(\bar p_i - x_i - \sum_{j = 1}^n a_{ji} p_j)^+ / \sum_{l = 1}^m s_{il} q_l$.  Thus the number of units of asset $k$ to be sold is exactly that fraction of the total holdings $s_{ik}$.
\end{example}

With a given liquidation function $\gamma$ we are able to fully describe the clearing mechanism via the valuation and pricing formulations given previously.  Given a payment vector $p \in [0,\bar p]$ and pricing vector $q \in [0,\bar q]$ the updated payments and prices is given by a \emph{clearing mechanism}.
The clearing mechanism is defined by the function $\phi: [0,\bar{p}] \times [0,\bar{q}] \to [0,\bar p] \times [0,\bar q]$ where
\begin{equation}\label{Eq:clearing}
\phi(p,q) := \left(\begin{array}{c}\bar{p} \wedge \left(x + S q + A^\T p\right) \\ F\left(\sum_{i = 1}^n \left[s_i \wedge \gamma_i(p,q)\right]\right)\end{array}\right).
\end{equation}
We use the clearing mechanism to compute the \emph{realized payment} or \emph{clearing vector} $p^* \in [0,\bar p]$ and implied \emph{clearing price vector} $q^* \in [0,\bar q]$ of the illiquid assets.
The values of the clearing payments and prices are given by fixed points of $\phi$ defined in Equation~\eqref{Eq:clearing}, i.e., 
\[(p^*,q^*) = \phi(p^*,q^*).\]

\begin{theorem}\label{Thm:clearing-exist}
Consider a financial system $(A,\bar{p})$ with liquid endowments $x$ and illiquid endowments $S$.  Consider liquidation function $\gamma$ and inverse demand function $F$ satisfying Assumption~\ref{Ass:idf}.
\begin{enumerate}
\item\label{Thm:clearing-exist-cont} If the summation of liquidation functions $\sum_{i = 1}^n \gamma_i$ is continuous, then there exists a clearing payment and pricing vector $(p^*,q^*)$.
\item\label{Thm:clearing-exist-monotone} If the summation of liquidation functions $\sum_{i = 1}^n \gamma_i$ is nonincreasing, then there exists a greatest and least clearing payment vector and vector of prices, $(p^+,q^+) \geq (p^-,q^-)$.
\item\label{Thm:clearing-unique} If the summation of liquidation functions $\sum_{i = 1}^n \gamma_i$ is nonincreasing and $\beta \in \bbr^m_+ \mapsto \beta^\T F(\beta)$ is strictly increasing, then there exists a unique realized payment vector and implied price vector if the financial system with liquid endowments $x+SF(\sum_{k = 1}^n s_k)$ (and no illiquid endowments) is regular in the sense of \cite[Definition 5]{EN01}.
\end{enumerate}
\end{theorem}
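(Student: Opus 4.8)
All three claims concern fixed points of the clearing map $\phi$ of \eqref{Eq:clearing} on the box $[0,\bar p]\times[0,\bar q]$, which is a nonempty compact convex complete lattice. I would first record that $\phi$ is a self-map of this box: the payment coordinate $\bar p\wedge(x+Sq+A^\T p)$ lies in $[0,\bar p]$ because $x,S,A,p\ge0$, and the price coordinate lies in $[0,\bar q]$ since $F$ maps into $[0,\bar q]$ by Assumption~\ref{Ass:idf}.

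For \ref{Thm:clearing-exist-cont} I would apply Brouwer's theorem. The payment coordinate is jointly continuous, and the price coordinate is $F$ (continuous, by Assumption~\ref{Ass:idf}) composed with the aggregate liquidation $(p,q)\mapsto\sum_i[s_i\wedge\gamma_i(p,q)]$; continuity of this aggregate, furnished by the hypothesis on $\sum_i\gamma_i$, makes $\phi$ continuous and yields a fixed point. For \ref{Thm:clearing-exist-monotone} I would instead use Knaster--Tarski, checking that $\phi$ is order preserving: the payment coordinate is nondecreasing in $(p,q)$ since $S,A\ge0$, while the price coordinate is nondecreasing because $F$ is nonincreasing and is applied to the aggregate liquidation, which is nonincreasing under the stated hypothesis. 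Tarski's theorem then produces a greatest fixed point $(p^+,q^+)$ and a least one $(p^-,q^-)$, with $(p^+,q^+)\ge(p^-,q^-)$.

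Claim \ref{Thm:clearing-unique} is the substantial one: existence comes from \ref{Thm:clearing-exist-monotone}, and I must show $(p^+,q^+)=(p^-,q^-)$, which I would do by first equating the prices and then the payments. For the prices, set $\beta^\pm:=\sum_i[s_i\wedge\gamma_i(p^\pm,q^\pm)]$; monotonicity of the aggregate liquidation and $(p^+,q^+)\ge(p^-,q^-)$ give $\beta^+\le\beta^-$, hence $q^\pm=F(\beta^\pm)$ with $q^+\ge q^-$. The crux is a cash-conservation identity for the total liquidation proceeds $V^\pm:=\beta^{\pm\T}F(\beta^\pm)=(q^\pm)^\T\beta^\pm$. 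At any fixed point firm $i$ raises $V_i=q^\T(s_i\wedge\gamma_i)=(q^\T s_i)\wedge(\bar p_i-x_i-\sum_j a_{ji}p_j)^+$ by Assumption~\ref{Ass:min-liquidation}; distinguishing whether firm $i$ runs a shortfall shows its nonnegative cash surplus $x_i+\sum_j a_{ji}p_j+V_i-p_i$ equals $(x_i+\sum_j a_{ji}p_j-\bar p_i)^+$. Summing the resulting identity $p_i=x_i+\sum_j a_{ji}p_j+V_i-(x_i+\sum_j a_{ji}p_j-\bar p_i)^+$ over $i$, and cancelling interbank flows via $\sum_i a_{ji}=1$, I obtain $V=\sum_i(x_i+\sum_j a_{ji}p_j-\bar p_i)^+-\sum_i x_i$, whose right-hand side is nondecreasing in $p$; with $p^+\ge p^-$ this gives $V^+\ge V^-$. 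But $\beta^+\le\beta^-$ and monotonicity of $\beta\mapsto\beta^\T F(\beta)$ give $V^+\le V^-$, so $V^+=V^-$, and strict monotonicity then forces $\beta^+=\beta^-=:\beta^*$, whence $q^+=q^-=:q^*$.

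With the single price $q^*$ in hand, both $p^+$ and $p^-$ solve $p=\bar p\wedge(x+Sq^*+A^\T p)$, i.e.\ the Eisenberg--Noe clearing problem for $(A,\bar p)$ with purely liquid endowments $x+Sq^*$. Since $\beta^*\le\sum_k s_k$ forces $q^*=F(\beta^*)\ge F(\sum_k s_k)$ and hence $x+Sq^*\ge x+SF(\sum_k s_k)$, regularity of the worst-case system in the sense of \cite[Definition 5]{EN01} is inherited by this larger-endowment system (the defining positivity condition is preserved under increasing the liquid endowment), and the uniqueness theorem of \cite{EN01} gives $p^+=p^-$. I expect the conservation identity powering the price step to be the main obstacle, since it is what couples the strict monotonicity of $\beta\mapsto\beta^\T F(\beta)$ to the ordering of the two clearing solutions; the payment step is then a direct reduction to \cite{EN01}.
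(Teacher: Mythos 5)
Your proposal is correct and follows exactly the route the paper intends: Brouwer for part (i), Tarski for part (ii), and for part (iii) the paper simply cites \cite[Theorem 2]{AFM16}, whose single-asset proof is precisely the conservation-of-liquidation-proceeds argument you reconstruct (the identity $V=\sum_i(x_i+\sum_j a_{ji}p_j-\bar p_i)^+-\sum_i x_i$ pitted against the strict monotonicity of $\beta\mapsto\beta^\T F(\beta)$, followed by reduction to Eisenberg--Noe uniqueness under regularity). You have in effect supplied the details the paper leaves to the reference, and your extension to the multi-asset aggregate $\beta=\sum_i[s_i\wedge\gamma_i]$ is the right one.
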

\begin{proof}
This is a trivial application of the Brouwer and Tarski fixed point theorems.  The final result is a simple extension of \cite[Theorem 2]{AFM16}.
\end{proof}

\begin{rem}
The liquidation functions presented in Examples~\ref{Ex:1stock-gamma} and \ref{Ex:proportional-gamma} are continuous and nonincreasing. 
As described in \cite{EN01}, regularity implies that all parts of the financial system have some positive endowment to distribute.  A simple condition for regularity is if all firms $i$ satisfy $\min(x_i,\min_k s_{ik}) > 0$ with the minimum prices $F(\sum_{i = 1}^n s_i) \in \bbr^m_{++}$ all strictly greater than $0$.  
\end{rem}

\section{Equilibrium liquidation strategies}\label{Sec:games}

In this section we now consider the case where the liquidation strategy of each firm is not provided a priori.  This is a more realistic scenario since it is rare that an outsider can state exactly how a firm will behave during a crisis.  We assume that all firms are attempting to maximize their own equity value.  Since the liquidations of assets of one institution can impact the equity of another, this is an equilibrium problem.  While we will write the problem as a function of the actions of each bank as individuals, in practice to solve this problem, each firm only needs look at the aggregate liquidations of all other banks.  That is, no firm needs to know who is selling, only the aggregate amount being sold within the financial network.

Assuming total information, each firm wishes to maximize its equity and loss value 
-- given by $x_i + \sum_{k = 1}^m s_{ik} q_k^* + \sum_{j = 1}^n a_{ji} p_j^* - \bar p_i$ --  
for some clearing payment and price $(p^*,q^*)$.
Each firm, however, can choose the number of each asset to liquidate, that is, every firm $i = 1,2,...,n$ will choose to sell $\gamma_i \in \bbr^m_+$ units of each asset.  As one firm decides what to liquidate, all other firms do as well, i.e., given the liquidation strategy $\gamma_{-i}^*$ for all other firms, firm $i$ will choose how to liquidate.  Each firm $i$ will choose to liquidate according to the maximization problem
\begin{equation}\label{Eq:ind-opt}
\begin{split}
\gamma_i(p,q,\gamma_{-i}^*) &\in \argmax_{g_i \in \Gamma_i(p,q)} s_i^\T F\left([s_i \wedge g_i] + \sum_{j \neq i} [s_j \wedge \gamma_j^*]\right)\\
\Gamma_i(p,q) &= \left\{\gamma_i \in \bbr^m_+ \; \left| \; q^\T [s_i \wedge \gamma_i] = \left(q^\T s_i\right) \wedge \left(\bar p_i - x_i - \sum_{j = 1}^n a_{ji} p_j \right)^+\right.\right\}
\end{split}
\end{equation}
Equation~\eqref{Eq:ind-opt} gives a formula for finding how many units of each asset any particular firm should sell to maximize its own valuation.  We restrict the allowable liquidations to follow Assumption~\ref{Ass:min-liquidation} -- the minimal liquidation condition -- at the initial set of prices $q$.  If agent $i$ were to act differently (and assuming all other firms liquidate according to $\gamma_{-i}^*$) then it would decrease its own valuation.

We thus have a modified (possibly set-valued) clearing mechanism to find the clearing payment $p^* \in \bbr^n_+$, clearing price $q^* \in \bbr^m_+$, and \emph{equilibrium liquidation strategy} $\gamma^* \in \bbr^{n \times m}_+$.  The clearing mechanism is defined by the set-valued function $\Psi: [0,\bar p] \times [0,\bar q] \times [0,S] \to \pcal([0,\bar p] \times [0,\bar q] \times [0,S])$ where $\pcal$ denotes the power set.  We specify the clearing mechanism by defining it pointwise as
\begin{align}\label{Eq:liquidation}
\Psi(p,q,\gamma) := &\left\{\bar p \wedge (x + Sq + A^\T p)\right\} \times \left\{F\left(\sum_{i = 1}^n [s_i \wedge \gamma_i]\right)\right\} \times \prod_{i = 1}^n \argmax_{g_i \in \Gamma_i(p,q)} s_i^\T F\left([s_i \wedge g_i] + \sum_{j \neq i} [s_j \wedge \gamma_j]\right).
\end{align}
The clearing payment, price, and liquidation strategies are provided by the fixed point problem
$(p^*,q^*,\gamma^*) \in \Psi(p^*,q^*,\gamma^*)$.
At this equilibrium, each firm is satisfying the minimum liquidation condition (Assumption~\ref{Ass:min-liquidation}) since $q^* = F(\sum_{i = 1}^n [s_i \wedge \gamma_i^*])$.  Importantly, at an equilibrium, no firm can increase its own valuation without another firm altering its liquidation strategy, i.e.,  the solution is a Nash equilibrium.  In the following theorem we will provide conditions for the existence of such an equilibrium solution to the clearing mechanism $\Psi$.

\begin{theorem}\label{Thm:equil-liquidation}
Consider a financial system $(A,\bar{p})$ with liquid endowments $x$ and illiquid endowments $S$.  Consider some inverse demand function $F$ satisfying Assumption~\ref{Ass:idf} such that $F(\sum_{i = 1}^n s_i) \in \bbr^m_{++}$ and $\beta \in [0,\sum_{j = 1}^n s_j] \mapsto s_i^\T F(\beta)$ is quasi-concave for all $i = 1,2,...,n$.  There exists a combined clearing payment, clearing price, and equilibrium liquidation strategy, i.e., there exists $(p^*,q^*,\gamma^*) \in \Psi(p^*,q^*,\gamma^*)$.
\end{theorem}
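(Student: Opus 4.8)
The plan is to prove existence via the Kakutani fixed point theorem, after two reductions that tame the constraint correspondences $\Gamma_i$. First, restrict the domain of $\Psi$ from $[0,\bar p]\times[0,\bar q]\times[0,S]$ to $D:=[0,\bar p]\times[\underline q,\bar q]\times[0,S]$, where $\underline q:=F(\sum_{i=1}^n s_i)\in\bbr^m_{++}$. Because $F$ is nonincreasing (Assumption~\ref{Ass:idf}) and any liquidation obeys $\sum_{i=1}^n[s_i\wedge\gamma_i]\le\sum_{i=1}^n s_i$, the price factor of $\Psi$ always takes values in $[\underline q,\bar q]$; thus $\Psi(D)\subseteq D$, and since every fixed point must satisfy $q^*=F(\sum_i[s_i\wedge\gamma_i^*])\ge\underline q$, no equilibrium is discarded. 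The benefit is that on $D$ each quoted price is strictly positive, which is exactly what keeps the budget constraint in $\Gamma_i$ nondegenerate. Second, on $[0,S]$ the no-short-selling truncation is vacuous ($s_i\wedge\gamma_i=\gamma_i$), so the $i$-th factor of the liquidation part of $\Psi$ reduces to maximizing the map $g_i\mapsto s_i^\T F(g_i+\sum_{j\ne i}\gamma_j)$ over the convex set $\tilde\Gamma_i(p,q):=\{g_i\in[0,s_i]:q^\T g_i=c_i(p,q)\}$, where $c_i(p,q):=(q^\T s_i)\wedge(\bar p_i-x_i-\sum_{j=1}^n a_{ji}p_j)^+$.

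Next I would check the Kakutani hypotheses for the restricted $\Psi$. The domain $D$ is a product of boxes, hence nonempty, compact, and convex. Its first two factors are single-valued continuous functions of $(p,q)$ and of $\gamma$, respectively. For the liquidation factors, nonemptiness of $\tilde\Gamma_i(p,q)$ follows from $0\le c_i\le q^\T s_i$: the segment $\lambda\mapsto\lambda s_i$, $\lambda\in[0,1]$, sweeps $q^\T(\lambda s_i)$ continuously across $[0,q^\T s_i]$, so some $\lambda^\ast s_i$ meets the constraint. As $\tilde\Gamma_i$ is compact and the objective continuous, each $\argmax$ is nonempty; and since the objective is quasi-concave (the hypothesis that $\beta\mapsto s_i^\T F(\beta)$ is quasi-concave on $[0,\sum_j s_j]$, precomposed with the affine shift $g_i\mapsto g_i+\sum_{j\ne i}\gamma_j$, whose image lies in $[0,\sum_j s_j]$) and $\tilde\Gamma_i$ is convex (a box intersected with a hyperplane), each $\argmax$ set is convex. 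Hence $\Psi$ has nonempty convex values.

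The crux is showing $\Psi$ has closed graph, i.e.\ that each $\argmax$ correspondence is upper hemicontinuous. I would obtain this from Berge's maximum theorem: the objective $(p,q,\gamma_{-i},g_i)\mapsto s_i^\T F(g_i+\sum_{j\ne i}\gamma_j)$ is jointly continuous, so it suffices that $(p,q)\mapsto\tilde\Gamma_i(p,q)$ be a continuous compact-valued correspondence on $D$. Upper hemicontinuity is routine, as $c_i$ is continuous and the graph lies in the fixed compact box $[0,s_i]$. The main obstacle is lower hemicontinuity: given $g_i\in\tilde\Gamma_i(p,q)$ and $(p',q')\to(p,q)$, I must exhibit feasible points at $(p',q')$ converging to $g_i$. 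Since $0\le c_i(p',q')\le q'^\T s_i$, the linear value $q'^\T(\cdot)$ along the segment from $g_i$ to whichever of the extreme points $s_i$ (full liquidation) or $0$ (no liquidation) brackets the target sweeps continuously onto $c_i(p',q')$, yielding a feasible $g_i'\in[0,s_i]$. Here the restriction to $D$ is essential: because $q'\ge\underline q>0$ and $s_i-g_i,\,g_i\ge0$ componentwise, the relevant slope $q'^\T(s_i-g_i)$ (resp.\ $q'^\T g_i$) dominates $\min_k\underline q_k$ times the displacement, so $\|g_i'-g_i\|\le|c_i(p',q')-q'^\T g_i|/\min_k\underline q_k$, which tends to $0$ by continuity of $c_i$. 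This proves continuity of $\tilde\Gamma_i$; Berge's theorem then gives closed graph of each $\argmax$, a property inherited by the product correspondence.

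With all Kakutani hypotheses verified on $D$, there exists $(p^*,q^*,\gamma^*)\in\Psi(p^*,q^*,\gamma^*)$ lying in $D$. By the first reduction this is a fixed point of the original mechanism, and as noted following~\eqref{Eq:liquidation} it constitutes a Nash equilibrium liquidation strategy together with its associated clearing payments and prices.
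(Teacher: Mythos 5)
Your proof is correct and follows essentially the same route as the paper's: reduce each firm's best response to an $\argmax$ over the truncated constraint set $\{\gamma_i \in [0,s_i] : q^\T\gamma_i = (q^\T s_i)\wedge(\bar p_i - x_i - \sum_j a_{ji}p_j)^+\}$, verify that this correspondence is nonempty, compact-, convex-valued and continuous, invoke Berge's maximum theorem plus quasi-concavity to get an upper hemicontinuous, convex-valued product correspondence, and conclude with Kakutani. The one substantive difference is to your credit: your explicit restriction of the price domain to $[F(\sum_i s_i),\bar q]$ is precisely where the hypothesis $F(\sum_i s_i)\in\bbr^m_{++}$ does its work, since lower hemicontinuity of the constraint correspondence can genuinely fail when some price component vanishes (the feasible set collapses to $\{s_i\}$ along $q^k\downarrow 0$ while equaling all of $[0,s_i]$ at $q=0$), a point the paper's own lower-semicontinuity step passes over without comment.
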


The following simple example demonstrates that the joint clearing payment, clearing price, and equilibrium liquidation strategy will not be unique in general.  This example, however, hints to a uniqueness argument we will use in Corollary~\ref{Cor:equil-liquidation}.
\begin{example}\label{Ex:non-unique}
Consider any financial system with $n$ firms and an additional sink node (e.g., the external economy).  Assume additionally that all $n$ firms are symmetric, i.e., they hold the same endowments and have the same obligation structures (e.g., a ring or completely connected network).  Given an equilibrium liquidation strategy $\gamma^*$, any other strategy $\bar\gamma \in [0,S]$ satisfying the minimal liquidation condition such that $\sum_{i = 1}^n \gamma_i^* = \sum_{i = 1}^n \bar\gamma_i$ will also be an equilibrium liquidation strategy.  However, it should be noted that the clearing payments and prices will be the same for any choice of $\bar\gamma$ constructed in this manner.
\end{example}

To define a notion of uniqueness we first need a modified version of diagonally strictly concave games from \ci{rosen65}.
\begin{definition}\label{Defn:agg-concave}
Let $u_i: \bbr^{n \times m}_+ \to \bbr$ for each $i = 1,2,...,n$ be continuously differentiable payoff functions for a game.  The payoff functions $(u_i)_{i = 1,2,...,n}$ are called \emph{aggregate diagonally strictly concave} if 
\[\sum_{i = 1}^n (\bar x_i - x^*_i)^\T \nabla_i u_i(x^*) + \sum_{i = 1}^n (x^*_i - \bar x_i)^\T \nabla_i u_i(\bar x) > 0\]
for every $x^*,\bar x \in \bbr^{n \times m}_+$ with $\sum_{i = 1}^n x^*_i \neq \sum_{i = 1}^n \bar x^*_i$ where $\nabla_i u_i = \left(\frac{\partial u_i}{\partial x_{i1}},\frac{\partial u_i}{\partial x_{i2}},...,\frac{\partial u_i}{\partial x_{im}}\right)^\T$.
\end{definition}

\begin{corollary}\label{Cor:equil-liquidation}
Consider the setting of Theorem~\ref{Thm:equil-liquidation} such that the inverse demand function $F$ is additionally continuously differentiable and $(s_i^\T F(\cdot))_{i = 1,2,...,n}$ is aggregate diagonally strictly concave.  There exists a unique aggregate equilibrium liquidation strategy for any given payments $p \in [0,\bar p]$ and prices $q \in [F(\sum_{i = 1}^n s_i),\bar q]$, i.e., if $\gamma_i^* \in \gamma_i(p,q,\gamma_{-i}^*) \cap [0,s_i]$ and $\hat \gamma_i^* \in \gamma_i(p,q,\hat\gamma_{-i}^*) \cap [0,s_i]$ for every bank $i = 1,2,...,n$ then $\sum_{i = 1}^n \gamma_i^* = \sum_{i = 1}^n \hat\gamma_i^*$.  Further, let $(p,q) \in [0,\bar p] \times [F(\sum_{i = 1}^n s_i),\bar q] \mapsto \sum\gamma^*(p,q) \in \bbr^m_+$ denote this unique aggregate, then $\sum\gamma^*$ is nonincreasing.
\end{corollary}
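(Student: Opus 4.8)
The plan is to read both assertions off the variational characterization of the liquidation game together with the aggregate diagonal strict concavity of Definition~\ref{Defn:agg-concave}. The first step I would take is to simplify the best-response problem \eqref{Eq:ind-opt} on the relevant region. Because we restrict to $\gamma_i^*\in[0,s_i]$, the operator $s_i\wedge g_i$ collapses to $g_i$, so firm $i$ is maximizing the $C^1$ payoff $u_i(\gamma)=s_i^\T F\!\left(\sum_{j=1}^n\gamma_j\right)$ over the fixed convex compact set $K_i(p,q)=\{\gamma_i\in[0,s_i]: q^\T\gamma_i=c_i(p,q)\}$, where $c_i(p,q)=(q^\T s_i)\wedge(\bar p_i-x_i-\sum_j a_{ji}p_j)^+$, and with own-gradient $\nabla_i u_i(\gamma)=DF\!\left(\sum_j\gamma_j\right)^\T s_i$. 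The key structural fact is that $K_i(p,q)$ depends only on $(p,q)$ and not on the rival strategies, so the constraints are uncoupled; since $q\geq F(\sum_i s_i)>0$ the budget hyperplane is nondegenerate, and a Nash equilibrium is exactly a solution of the variational inequality on the product $K(p,q)=\prod_i K_i(p,q)$.

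For the uniqueness of the aggregate I would argue by contradiction in the spirit of \cite{rosen65}, but simplified by the uncoupling. Let $\gamma^*$ and $\hat\gamma^*$ be two equilibria at the \emph{same} $(p,q)$ and suppose $\sum_i\gamma_i^*\neq\sum_i\hat\gamma_i^*$. The necessary first-order condition for firm $i$ at $\gamma^*$ gives $(g_i-\gamma_i^*)^\T\nabla_i u_i(\gamma^*)\leq 0$ for all $g_i\in K_i(p,q)$; since both equilibria share the same feasible set, $\hat\gamma_i^*$ is admissible and hence $(\hat\gamma_i^*-\gamma_i^*)^\T\nabla_i u_i(\gamma^*)\leq 0$, and symmetrically $(\gamma_i^*-\hat\gamma_i^*)^\T\nabla_i u_i(\hat\gamma^*)\leq 0$. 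Summing over $i$ and adding the two inequalities produces precisely the left-hand side of the aggregate diagonally strictly concave condition of Definition~\ref{Defn:agg-concave} with $x^*=\gamma^*,\ \bar x=\hat\gamma^*$; that expression is strictly positive whenever the aggregates differ, contradicting the $\leq 0$ we derived. Therefore $\sum_i\gamma_i^*=\sum_i\hat\gamma_i^*$, which is the clean core of the proof.

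For the monotonicity of $\sum\gamma^*$ I would again compare two variational inequalities, now at ordered parameters $(p,q)\leq(\hat p,\hat q)$, with aggregates $\sigma=\sum\gamma^*(p,q)$ and $\hat\sigma=\sum\gamma^*(\hat p,\hat q)$, and attempt to force $\hat\sigma\leq\sigma$ componentwise. The engine is the same strict monotonicity of the pseudo-gradient $\gamma\mapsto(\nabla_i u_i(\gamma))_i$ furnished by Definition~\ref{Defn:agg-concave}. The genuine obstacle is that the feasible sets now differ: the budget hyperplane defining $K_i$ both tilts (as $q$ changes) and shifts (as $c_i$ changes), so $\hat\gamma_i^*$ is no longer automatically admissible at $\gamma^*$, and one cannot simply plug one equilibrium into the other's optimality condition. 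I would therefore construct admissible cross-deviations by rescaling/projecting each $\hat\gamma_i^*$ back onto $K_i(p,q)$ inside the box $[0,s_i]$, while tracking the effect of the parameter increase: raising $p$ weakly lowers each shortfall $(\bar p_i-x_i-\sum_j a_{ji}p_j)^+$, while raising $q$ simultaneously raises the cap $q^\T s_i$ and lowers the number of units needed per dollar raised, so that the liquidation target measured \emph{in units} can only contract even though the dollar target $c_i$ moves non-monotonically through its defining minimum.

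Feeding these rescaled deviations into the two first-order inequalities, isolating the aggregate difference $\hat\sigma-\sigma$, and invoking the strict monotonicity should yield $\hat\sigma\leq\sigma$; the continuous differentiability of $F$ added in the hypothesis is what lets me run this argument either on the explicit constructed deviations or, more robustly, infinitesimally along a monotone path from $(p,q)$ to $(\hat p,\hat q)$ via the implicit function theorem applied to the (locally nonsingular, by the strict concavity) equilibrium conditions. I expect the hard part to be exactly this monotonicity step, and within it the bookkeeping of the active box constraints $\gamma_{ik}^*\in\{0,s_{ik}\}$ and of the kink in $c_i$ where the cap $q^\T s_i$ and the shortfall cross, since both create non-smooth switches that must be shown to preserve the sign of the aggregate movement along the path.
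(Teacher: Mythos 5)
Your uniqueness argument is correct and is essentially the paper's own proof in a slightly cleaner form. The paper runs the Rosen-type contradiction through explicit KKT multipliers $(\lambda_i,\mu_i,\nu_i)$ for the budget hyperplane and the box constraints and then eliminates the multiplier terms using $q^\T\gamma_i^* = q^\T\hat\gamma_i^*$ and complementary slackness; your variational-inequality formulation $(\hat\gamma_i^* - \gamma_i^*)^\T\nabla_i u_i(\gamma^*)\le 0$ arrives at the same sum directly, exploiting exactly the observation you highlight, namely that $K_i(p,q)$ does not depend on the rivals' strategies, so each equilibrium point is feasible in the other's first-order condition. One small point worth making explicit: the first-order condition you invoke is the necessary condition for a maximizer of any differentiable function over a convex set, so nothing beyond differentiability of $F$ is needed for this direction, and your route avoids having to justify the KKT system at all.

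The genuine gap is the monotonicity of $\sum\gamma^*$. You correctly diagnose why the cross-feasibility trick fails there --- the sets $K_i(p,q)$ tilt with $q$ and shift with the kinked level $c_i(p,q)$, so one equilibrium is no longer admissible in the other's optimality condition --- but the proposed repair (rescaled deviations, or an implicit-function path argument) is only sketched, with the active box constraints and the kink in $c_i$ explicitly left unresolved; as written this half is a plan, not a proof. Componentwise monotonicity of a vector-valued aggregate is also not obviously implied by a decrease in the dollar shortfall, since a firm may substitute sales across assets as the constraint moves. For what it is worth, the paper does not do better: its entire argument for this claim is the single sentence that the aggregate is ``trivially nonincreasing by Bellman's principle,'' which engages with none of the difficulties you list. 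So your proposal matches the paper on the part the paper actually proves, and is candid about the part that remains unproved in both.
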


The following example provides a simple network setting which satisfies the conditions proposed in Corollary~\ref{Cor:equil-liquidation}.
\begin{example}\label{Ex:agg-concave}
Consider the financial system described in Example~\ref{Ex:non-unique} with each asset's price determined independently, i.e.\ $F(\gamma) := (\hat{F}_1(\gamma_1),\hat{F}_2(\gamma_2),...,\hat{F}_m(\gamma_m))^\T$ for any $\gamma \in \bbr^m_+$.  Then $(s_i^\T F(\cdot))_{i = 1,2,...,n}$ is aggregate diagonally strictly concave on $[0,\sum_{i = 1}^n s_i]$ if $\hat{F}_k: [0,\sum_{i = 1}^n s_{ik}] \to [0,\bar q_k]$ is strictly concave for every asset $k = 1,2,...,m$.
\end{example}

\begin{rem}\label{Rem:equil-unique}
Under the conditions of Corollary~\ref{Cor:equil-liquidation}, we can guarantee a greatest and least clearing payment vector and implied price vector as in Theorem~\ref{Thm:clearing-exist}\eqref{Thm:clearing-exist-monotone}.  If additionally the conditions of Theorem~\ref{Thm:clearing-exist}\eqref{Thm:clearing-unique} are satisfied then there exists a unique realized payment vector and implied price vector.  However, even under unique payments and prices, the equilibrium liquidation strategy may not be unique -- only the aggregate need be -- we refer to Example~\ref{Ex:non-unique} for a brief discussion of this.
\end{rem}

\section{Constructing the clearing vector}\label{Sec:algorithm}
For this section we will assume the conditions of Theorem~\ref{Thm:clearing-exist}(ii); notably, as stated in Remark~\ref{Rem:equil-unique}, the equilibrium strategies under the setting of Corollary~\ref{Cor:equil-liquidation} satisfy the necessary conditions.  
Under the conditions of uniqueness, e.g., in Theorem~\ref{Thm:clearing-exist}\eqref{Thm:clearing-unique}, Algorithm~\ref{Alg:clearing} provides the unique clearing payment and vector of prices.
We will introduce a modified version of the \emph{fictitious default algorithm} from \ci{EN01,RV13,AW_15,AFM16} for the construction of the greatest clearing payment and price $(p^+,q^+)$.  
While this algorithm does converge within at most $n+1$ iterations, it includes a fixed point problem at each iteration which may not converge in finite time.

\begin{alg}\label{Alg:clearing}
Under the assumptions of Theorem~\ref{Thm:clearing-exist}\eqref{Thm:clearing-exist-monotone} the greatest clearing payment and price $(p^+,q^+)$ can be found by the following algorithm in at most $n+1$ iterations.  
Initialize $k = 0$, $p^k = \bar p$, and $q^k = \bar q$.  Repeat until convergence:
\begin{enumerate}
\item Increment $k = k+1$;
\item For any firm $i = 1,2,...,n$, define the equity and loss level by
$e_i^k = x_i + \sum_{l = 1}^m s_{il}q_l^{k-1} + \sum_{j = 1}^n a_{ji} p_j^{k-1} - \bar p_i;$
\item Denote the set of insolvent banks by
$D^k := \left\{i \in \{1,2,...,n\} \; | \; e_i^k < 0\right\}$;
\item\label{Alg:terminate} If $k \geq 2$ and $D^k = D^{k-1}$ then terminate;
\item Define the matrix $\Lambda \in \{0,1\}^{n \times n}$ so that
$\Lambda_{ij}^k = \begin{cases}1 &\text{if } i = j \in D^k \\ 0 &\text{else}\end{cases}$.
$p^k = \hat p$ and $q^k = \hat q$ are the maximal solutions to the following fixed point problem
\begin{align}
\label{Eq:alg-fixedpt1} \hat p &= \left(I - \Lambda^k \right)\bar p + \Lambda^k\left(x + S\hat q + A^\T \hat p\right),\\
\label{Eq:alg-fixedpt2} \hat q &= F\left(\sum_{i \in D^k} s_i + \sum_{i \not\in D^k} \left[s_i \wedge \gamma_i(\hat p,\hat q)\right]\right).
\end{align}
\end{enumerate}
\end{alg}

\begin{rem}
If the financial system with liquid endowments $x+SF(\sum_{k = 1}^n s_k)$ (and no illiquid endowments) is regular in the sense of \cite[Definition 5]{EN01} then \eqref{Eq:alg-fixedpt1} is immediately solvable as a function of $\hat q$ by standard input-output matrix results (see e.g.\ \ci[Theorem~8.3.2]{karlin59}), given by
\[ \hat p = \left(I - \Lambda^k A^\T\right)^{-1} \left[\bar p + \Lambda^k\left(x + S\hat q - \bar p\right)\right].\]
\end{rem}

Algorithm~\ref{Alg:clearing} begins with the assumption that all obligations are paid in full ($p = \bar p$) and the price of the illiquid assets are at the maximum level ($q = \bar q$).  The first iterations assumes that no firms have defaulted on their obligations and finds the resultant asset prices.  Using this new vector of prices may cause some firm to go into distress (via the mark-to-market valuation).  If no new firms were forced into distress -- in particular, if there were no liquidations in iteration 1 -- then the algorithm terminates.  (Note that the algorithm presented in \ci{AW_15} uses this iteration in the initialization step itself.)  This process is then repeated assuming the set of defaulting firms is held constant from the results of the previous iteration.  If at any time the set of defaulting firms is equivalent to the prior set, then the algorithm stops as the fixed point has been reached.  Since there are only $n$ firms, that means this algorithm must take at least $1$ iteration but at most $n+1$ iterations.

\begin{rem}\label{Rem:algorithm}
Under the conditions of Corollary~\ref{Cor:equil-liquidation}, the greatest clearing payment vector and implied price vector $(p^+,q^+)$ can be computed via Algorithm~\ref{Alg:clearing}.  To accomplish this, we modify~\eqref{Eq:alg-fixedpt2} to consider only the aggregate liquidation rather than the individual strategies.  That is, we replace the summation of arguments within the inverse demand function of~\eqref{Eq:alg-fixedpt2} by the unique aggregate liquidation function $\sum\gamma^*$ defined in Corollary~\ref{Cor:equil-liquidation}.  This unique aggregate strategy can be computed by summing any equilibrium strategy (found, e.g., with Scarf's algorithm \ci{scarf67}).
\end{rem}

\section{Proofs}

\begin{proof}[Proof of Theorem~\ref{Thm:equil-liquidation}]
Note that in problem \eqref{Eq:ind-opt} the liquidation strategies are always cut off at level of asset holdings.  Therefore we can consider the equivalent problem of maximizing over
$\hat \Gamma_i(p,q) = \{\gamma_i \in [0,s_i] \; | \; q^\T \gamma_i = (q^\T s_i) \wedge \left(\bar{p}_i - x_i - \sum_{j = 1}^n a_{ji} p_j\right)^+ \}$.
As there may be a set of solutions to problem \eqref{Eq:ind-opt}, we will denote the full set of solutions by the set-valued mapping $G_i$ defined by
$G_i(p,q,\gamma_{-i}^*) := \argmax_{g_i \in \hat \Gamma_i(p,q)} s_i^\T F\left(g_i + \sum_{j \neq i} \gamma_j^*\right)$ for every $i$.

By assumption the objective function $(p,q,\gamma_{-i}^*,\gamma_i) \mapsto s_i^\T F(\gamma_i + \sum_{j \neq i} \gamma_j^*)$ is continuous and quasi-concave.  If $\hat \Gamma_i$ is nonempty compact-valued and a continuous correspondence then we can apply the Berge Maximum Theorem.
\begin{enumerate}
\item (\emph{Nonempty} and \emph{Compact-valued}) By construction $\gamma_i \in \hat \Gamma_i(p,q)$ if and only if $q^\T \gamma_i = r$ for some $r \in [0,q^\T s_i]$ and $\gamma_i \in [0,s_i]$.  This immediately implies that $\hat \Gamma_i(p,q)$ is not empty.
Additionally, $\hat \Gamma_i(p,q) \subseteq [0,s_i]$ by definition and is closed since $\gamma_i \mapsto q^\T \gamma_i$ is a continuous operator.
\item (\emph{Upper semicontinuous}) Utilizing the Closed Graph Theorem (cf.\ Theorem 17.11 in \cite{AB07}), since $\hat \Gamma_i$ is closed-valued and the range space is compact (the range space of $\hat \Gamma_i$ is $[0,s_i]$) then $\hat \Gamma_i$ is upper semicontinuous if and only if it has a closed graph.  This follows trivially from the continuity of the constraint equation of $\hat \Gamma_i$.  
\item (\emph{Lower semicontinuous})  By definition $\hat \Gamma_i$ is lower semicontinuous if $(p^k,q^k)_{k \in \bbn} \to (p,q)$ with $\gamma_i \in \hat \Gamma_i(p,q)$ then there exists a subsequence $(p^{k_l},q^{k_l})_{l \in \bbn}$ such that there exists a sequence $(\gamma_i^l)_{l \in \bbn} \to \gamma_i$ with $\gamma_i^l \in \hat \Gamma_i(p^{k_l},q^{k_l})$ for every $l \in \bbn$ (see, e.g., Theorem 17.19 in \cite{AB07}).  Since $\hat \Gamma_i(\hat p,\hat q)$ is a hyperplane (intersected with a bounded interval) with normal $\hat q$ and shift parameter $(\hat q^\T s_i) \wedge (\bar p_i - x_i - \sum_{j = 1}^n a_{ji} \hat p_j)^+$ continuous with respect to $(\hat p,\hat q)$.  This immediately implies that $\liminf_{k \to \infty} d(\gamma_i,\hat \Gamma_i(p^k,q^k)) = 0$ where $d(\gamma_i,\hat \Gamma_i(p^k,q^k)) = \inf_{\gamma_i^k \in \Gamma_i(p^k,q^k)} \|\gamma_i - \gamma_i^k\|$ is the minimum distance between $\gamma_i$ and $\hat \Gamma_i(p^k,q^k)$.  Since $\hat \Gamma_i$ is compact-valued the infimum is attained, we will call the solution $\gamma_i^k$.  By definition of the limit inferior, there exists a subsequence of $(\gamma_i^k)_{k \in \bbn}$ which converges to $\gamma_i$.
\end{enumerate}
Thus by the Berge Maximum Theorem (cf.\ Theorem 17.31 in \cite{AB07}) $G_i$ has nonempty compact values and is upper semicontinuous.  Further, $G_i$ has convex values since the objective function is quasi-concave and $\hat \Gamma_i$ is convex-valued (see, e.g., Corollary 9.20(1) in~\cite{S96}).
Thus we can write the clearing mechanism \eqref{Eq:liquidation} as
\[\Psi(p,q,\gamma) = \left\{\bar{p} \wedge (x + Sq + A^\T p)\right\} \times \left\{F\left(\sum_{i = 1}^n \gamma_i\right)\right\} \times \prod_{i = 1}^n G_i(p,q,\gamma_{-i}),\]
where $\Psi$ has nonempty compact and convex values.  Since $G_i$ is upper semicontinuous and compact-valued, and $(p,q,\gamma) \mapsto \bar p \wedge (x + Sq + A^\T p)$ and $(p,q,\gamma) \mapsto F(\sum_{i = 1}^n \gamma_i)$ are continuous and single-valued (and thus compact-valued), $\Psi$ is upper semicontinuous by Theorem 17.28 of \cite{AB07}.  Thus we can apply the Kakutani Fixed Point Theorem (cf.\ Theorem 3.2.3 in \cite{AF90}) to get the existence of a fixed point $(p^*,q^*,\gamma^*) \in \Psi(p^*,q^*,\gamma^*)$.
\end{proof}

\begin{proof}[Proof of Corollary~\ref{Cor:equil-liquidation}]
Fix $p \in [0,\bar p]$ and $q \in [F(\sum_{i = 1}^n s_i),\bar q]$.  Let $\gamma^*,\hat\gamma^* \in \prod_{i = 1}^n [0,s_i]$ be two equilibrium liquidation strategies.  Assume $\sum_{i = 1}^n \gamma_i^* \neq \sum_{i = 1}^n \hat\gamma_i^*$.  For notation, let $JF$ denote the Jacobian matrix of the inverse demand function $F$. Further, let $\lambda \in \bbr^n$ and $\mu,\nu \in \bbr_+^{n \times m}$ be the optimal KKT multipliers to the optimization problem~\eqref{Eq:ind-opt} corresponding to the solution $\gamma^*$ (respectively $\hat\lambda,\hat\mu,\hat\nu$ for $\hat\gamma^*$).  Herein $\lambda_i \in \bbr$ is the multiplier for the equality constraint, $\mu_i \in \bbr^m_+$ is the multiplier for $\gamma_i \geq 0$, and $\nu_i \in \bbr^m_+$ is the multiplier for $\gamma_i \leq s_i$.
\begin{align}
\label{Eq:KKT} 0 &= \sum_{i = 1}^n (\gamma_i^* - \hat\gamma_i^*)^\T\left[JF(\sum_{j = 1}^n \gamma_j^*)^\T s_i + \lambda_i q + \mu_i - \nu_i\right] + \sum_{i = 1}^n (\hat\gamma_i^* - \gamma_i^*)^\T\left[JF(\sum_{j = 1}^n \hat\gamma_j^*)^\T s_i + \hat\lambda_i q + \hat\mu_i - \hat\nu_i\right]\\
\label{Eq:KKT-reduce} &= \sum_{i = 1}^n (\gamma_i^* - \hat\gamma_i^*)^\T[JF(\sum_{j = 1}^n \gamma_i^*) - JF(\sum_{j = 1}^n \hat\gamma_i^*)]^\T s_i - \sum_{i = 1}^n [\mu_i^\T \hat\gamma_i^* + \nu_i^\T(s_i - \hat\gamma_i^*) + \hat\mu_i^\T \gamma_i^* + \hat\nu_i^\T(s_i - \gamma_i^*)]\\
\label{Eq:KKT-ineq} &\leq \sum_{i = 1}^n (\gamma_i^* - \hat\gamma_i^*)^\T JF(\sum_{j = 1}^n \gamma_i^*)^\T s_i + \sum_{i = 1}^n (\hat\gamma_i^* - \gamma_i^*)^\T JF(\sum_{j = 1}^n \hat\gamma_i^*)^\T s_i < 0.
\end{align}
Thus the aggregate liquidation strategy must be unique.  Note that \eqref{Eq:KKT} follows from the KKT conditions of the maximization problem \eqref{Eq:ind-opt}.  \eqref{Eq:KKT-reduce} follows from $q^\T\gamma_i^* = q^\T\hat\gamma_i^*$ for every $i$, $\mu_{ik}\gamma_{ik} = 0$ for every $i$ and $k$, and $\nu_{ik}\gamma_{ik} = \nu_{ik}s_{ik}$ for every $i$ and $k$.  \eqref{Eq:KKT-ineq} follows from $\mu_i,\nu_i,\hat\mu_i,\hat\nu_i \geq 0$ and $\gamma_i^*,\hat\gamma_i^* \in [0,s_i]$ for every $i$.  
The final inequality is trivial from the definition of aggregate diagonally strictly concave.
Further, the aggregate liquidation strategy is trivially nonincreasing by Bellman's principle.
\end{proof}

\begin{proof}[Proof of Algorithm~\ref{Alg:clearing}]
First, note the only change in the fixed point problem described in Equations~\eqref{Eq:alg-fixedpt1} and~\eqref{Eq:alg-fixedpt2} is the matrix $\Lambda^k$ which depends solely on the defaulting firms $D^k$.  Therefore the fixed point $(\hat p,\hat q)$ would be the same at iteration $k$ as at $k-1$ if $D^k = D^{k-1}$.  Thus the termination condition \eqref{Alg:terminate} is valid.  Further, Equations~\eqref{Eq:alg-fixedpt1} and~\eqref{Eq:alg-fixedpt2} are nonincreasing as a function of the defaulting firms $D$.  Since $D^2 \supseteq D^1$, it follows that $D^k \supseteq D^{k-1}$ by induction.  Since the maximum cardinality of $D^k$ is $n$, and with the termination condition, the maximum number of iterations is $n+1$.

To construct Equations~\eqref{Eq:alg-fixedpt1} and~\eqref{Eq:alg-fixedpt2} we assume that the firms in distress are fixed at each iteration $k$, i.e., a firm pays $\hat p = \bar p_i$ if $i \not\in D^k$ and pays $\hat p_i = x_i + \sum_{l = 1}^m s_{il} \hat q_l + \sum_{j = 1}^n a_{ji} \hat p_j$ if $i \in D^k$, similarly for $\hat q$ we observe that any firm $i \in D^k$ must liquidate all its assets by Assumption~\ref{Ass:min-liquidation}.  Thus Equations~\eqref{Eq:alg-fixedpt1} and~\eqref{Eq:alg-fixedpt2} follow immediately.
\end{proof}

\bibliographystyle{plain}
\bibliography{bibtex2}

\end{document}